\newtheorem{thm}{Theorem}[section]
 \newtheorem{cor}{Corollary}[section]
 \newtheorem{lem}{Lemma}[section]
 \newtheorem{prop}{Proposition}[section]
 \newtheorem{conj}{Conjecture}[section]
\newtheorem{rem}{\bf{Remark}}[section]
 \newtheorem{ex}{\bf{Example}}[section]
\newcommand{\eh}{\hfill}\newlength{\sperr}
\newenvironment{proof}{{\settowidth{\sperr}{\bf\rm
Proof}%
\par\addvspace{0.3cm}\noindent\parbox[t]{1.3\sperr}
{\textit{ P\eh r\eh o\eh o\eh f\eh .}}%
}}{\nopagebreak\mbox{}
$\Box$\par\addvspace{0.3cm}}
\def\wt{\widetilde}
\title{Comparison of Thermodynamic Characteristics in 
Ordinary Quantum and Classical Approaches  and Game Theory}
\author{Lev Sakhnovich}
\date{}
\begin{document}
\maketitle
\begin{abstract}
 We fix the temperature $T$ and consider  mean energy
 and Boltzmann-Gibbs-Shannon entropy as two players of a game. 
As a result, basic formulas for the ordinary quantum mean energy and the Boltzmann-Gibbs-Shannon entropy are derived. 
We compare also the quantum and classical approaches  without a demand for Plank's constant being small.
 Important inequalities for statistical sum, quantum energy,
 quantum entropy, and their classical analogs follow.
\end{abstract}

Address: 99 Cove  ave., Milford, CT, 06461, USA \\
lsakhnovich@gmail.com

{\bf Keywords:}
{\it Entropy;  mean energy; game theory; statistical sum;
equilibrium system;   Wiener integration}



 \section{Introduction}\label{sec1}
 
 In the theory of ordinary quantum equilibrium systems \cite{2,4} the statistical sum
 \begin{equation}\label{1.1}
 Z_{q}(\beta, h)=\sum_{n=1}^{\infty}e^{-{\beta}E_{n}(h)},\quad \beta=1/kT
 \end{equation}
 plays the main role. In formula (\ref{1.1})  $k$ is the Bolzmann constant, $h$ is the Plank
constant, and $E_{n}(h)$ are eigenvalues of the energy operator $L$ (see
 (\ref{2.6})) of the considered system. In the classical physics the integral
 \begin{equation} \label{1.2}
 Z_{c}(\beta)=\int\int{e^{-{\beta}H(p,q)}\, dp, dq} \end{equation}
 is the analog of the sum in  (\ref{1.1}). In formula (\ref{1.2}) the function
 \begin{equation} \label{1.3}
 H(p,q)=\frac{1}{2m}\sum_{j=1}^{N}p_{j}^{2}+V(q)
 \end{equation}
 is the classical Hamiltonian, $p$ are corresponding generalized momenta, $q$ are
generalized coordinates, and $m$ is the mass of a particle.
  E. Wigner and J.G. Kirkwood (see \cite[Ch. 4]{3}) showed that the quantum statistical sum $Z_{q}(\beta,h)$ and
the classical statistical sum $Z_{c}(\beta)$ are connected by the relation
 \begin{equation} \label{1.4}
 \lim_{h{\to}0}(2{\pi}h)^{N}Z_{q}(\beta,h)=Z_{c}(\beta),
 \end{equation}
 where $N$ is the dimension of the corresponding coordinate space.
Schr\"odinger equation is of great current interest
 (see, for instance, recent  works 
\cite{0, d-1, d1, d2, d3, d4} and references therein).
In particular, the comparison of the
quantum and classical approaches without the demand for $h$ being small
 is of important scientific and methodological interest. For that, we consider the quantum mean energy
 \begin{equation} \label{1.5}
 E_{q}(\beta,h)=\sum_{n=1}^{\infty}E_{n}(h)e^{-{\beta}E_{n}(h)}/Z_{q}(\beta,h)
\end{equation}
and the classical mean energy
\begin{equation} \label{1.6}
 E_{c}(\beta)=\int{\int}H(p,q)e^{-{\beta}H(p,q)}dpdq/Z_{c}(\beta)
\end{equation}
of the same system. In the present paper we shall discuss the following conjectures.

\begin{conj}\label{Conjecture 1.1.}
The inequality
\begin{equation} \label{1.7}
 (2{\pi}h)^{N}Z_{q}(\beta,h){\leq}Z_{c}(\beta)
 \end{equation}
is true for all $h>0$ and $\beta>0$.
\end{conj}

\begin{conj}\label{Conjecture 1.2.}
The inequality
\begin{equation} \label{1.8}
E_{q}(\beta,h){\geq}E_{c}(\beta)
\end{equation}
is true for all $h>0$ and $\beta>0$.
\end{conj}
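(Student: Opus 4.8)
The plan is to treat Conjecture~1.2 together with Conjecture~1.1, since the two are two faces of a single function. Put
\[
\phi(\beta):=\log Z_{c}(\beta)-\log\!\bigl((2\pi h)^{N}Z_{q}(\beta,h)\bigr),\qquad \beta>0 .
\]
Because $E=-\partial_{\beta}\log Z$ and the factor $(2\pi h)^{N}$ does not depend on $\beta$, one has $E_{q}=-\partial_{\beta}\log\!\bigl((2\pi h)^{N}Z_{q}\bigr)$ and $E_{c}=-\partial_{\beta}\log Z_{c}$; hence Conjecture~1.1 is precisely ``$\phi\ge 0$ on $(0,\infty)$'', while Conjecture~1.2 is precisely ``$\phi'\ge 0$'', i.e.\ $\phi$ is non-decreasing. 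So I would proceed in four steps: (a) prove Conjecture~1.1; (b) identify the boundary value $\phi(0^{+})$; (c) prove that $\phi$ is convex on $(0,\infty)$; (d) invoke the elementary fact that a convex function on $[0,\infty)$ which vanishes at $0$ and is nowhere negative has $\phi'(0^{+})=\inf_{\beta>0}\phi(\beta)/\beta\ge 0$, and therefore, by convexity, $\phi'\ge\phi'(0^{+})\ge 0$ throughout --- which is Conjecture~1.2.

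For step (a) I would use the Wiener (Feynman--Kac) representation of the heat semigroup generated by the energy operator $L=-\tfrac{h^{2}}{2m}\Delta+V$. With $D=h^{2}/m$ the diagonal of $e^{-\beta L}$ equals $(2\pi D\beta)^{-N/2}\mathbb{E}^{\mathrm{br}}_{x}\bigl[\exp(-\!\int_{0}^{\beta}V(\omega(s))\,ds)\bigr]$, where $\mathbb{E}^{\mathrm{br}}_{x}$ is expectation over the Brownian bridge of duration $\beta$ based at $x$, so that
\[
Z_{q}(\beta,h)=(2\pi D\beta)^{-N/2}\!\int_{\mathbb{R}^{N}}\!\mathbb{E}^{\mathrm{br}}_{x}\!\Bigl[\exp\Bigl(-\!\int_{0}^{\beta}\!V(\omega(s))\,ds\Bigr)\Bigr]dx .
\]
Applying Jensen's inequality to the time average $\tfrac1\beta\int_{0}^{\beta}(\cdot)\,ds$ (convexity of $t\mapsto e^{-t}$), then interchanging the $ds$- and $dx$-integrations and using translation invariance of Lebesgue measure, the loop functional collapses and one gets $Z_{q}(\beta,h)\le(2\pi D\beta)^{-N/2}\int e^{-\beta V(x)}\,dx$. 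Since $\int e^{-\beta H(p,q)}\,dp\,dq=(2\pi m/\beta)^{N/2}\int e^{-\beta V(q)}\,dq$ and $(2\pi h)^{N}(2\pi D\beta)^{-N/2}=(2\pi m/\beta)^{N/2}$, this is exactly $(2\pi h)^{N}Z_{q}\le Z_{c}$. (The same bound follows from the Golden--Thompson inequality $\operatorname{Tr}e^{A+B}\le\operatorname{Tr}e^{A}e^{B}$ with $A$ the kinetic and $B$ the potential term.) For step (b), the leading term of the short-time heat-kernel expansion, $\operatorname{Tr}e^{-\beta L}\sim(2\pi D\beta)^{-N/2}\int e^{-\beta V}$ as $\beta\to0^{+}$, gives $(2\pi h)^{N}Z_{q}/Z_{c}\to1$, hence $\phi(0^{+})=0$.

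Step (c) is where the real work lies. Differentiating twice,
\[
\phi''(\beta)=\partial_{\beta}^{2}\log Z_{c}-\partial_{\beta}^{2}\log Z_{q}=\operatorname{Var}_{c}(H)-\operatorname{Var}_{q}(H),
\]
so convexity of $\phi$ is exactly the statement that, at every temperature, the classical energy fluctuations dominate the quantum ones. I would try to prove $\operatorname{Var}_{q}(H)\le\operatorname{Var}_{c}(H)$ by expressing $\operatorname{Var}_{q}(H)=\partial_{\beta}^{2}\log Z_{q}$ through two-time correlations of $V$ along the bridge in the Feynman--Kac measure above, and then bounding them against the classical value $\operatorname{Var}_{c}(H)=\tfrac{N}{2\beta^{2}}+\operatorname{Var}_{c}(V)$ (the independent sum of the Gaussian-momentum and Gibbs-position contributions) by a Jensen/Cauchy--Schwarz comparison of the kind used in step (a).

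The main obstacle is precisely this variance inequality for a general confining $V$: unlike Conjecture~1.1, it does not reduce to a single use of Jensen, and the $\nabla V$-terms produced when $\beta$ is differentiated inside the loop functional must be controlled --- the natural first cases to settle are $V$ convex, or $V$ a homogeneous power, where the loop variable can be rescaled so that $\beta$ enters only through $\sqrt{D\beta}$. As a consistency check, for the harmonic oscillator everything is explicit: $(2\pi h)Z_{q}=\pi h/\sinh(\beta h\omega/2)$, $Z_{c}=2\pi/(\beta\omega)$, so with $x=\beta h\omega/2$ one has $\phi(\beta)=\log(\sinh x/x)$, and the three facts used above --- $\phi\ge 0$, $\phi(0^{+})=0$, $\phi''\ge 0$ --- all reduce to the elementary inequalities $\sinh x\ge x$ and $x\coth x\ge 1$. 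This shows the scheme is internally coherent; promoting step (c) from such examples to arbitrary $V$ is the crux.
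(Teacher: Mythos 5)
Be aware first that the statement you were asked to prove is left as a \emph{conjecture} in the paper: the paper itself establishes only the averaged form $\int_{+0}^{\beta}\bigl(E_{q}(\gamma,h)-E_{c}(\gamma)\bigr)d\gamma\ge 0$ (Theorem \ref{Theorem 3.1.}), which in your notation is exactly $\phi(\beta)=\phi(\beta)-\phi(0^{+})\ge 0$. Your steps (a) and (b) reproduce precisely the ingredients the paper takes from Ray's theorem, namely $(2\pi h)^{N}Z_{q}\le Z_{c}$ (inequality (\ref{2.10})) and $(2\pi h)^{N}Z_{q}\sim Z_{c}$ as $\beta\to+0$ (relation (\ref{2.11})); your Feynman--Kac/Jensen (or Golden--Thompson) derivation is a legitimate alternative to citing Ray, though for a domain $\Omega$ with Dirichlet boundary one must work with the killed bridge, so the ``translation invariance collapses the loop'' step needs the monotonicity $\mathbb{E}^{\mathrm{br},\Omega}_{x}\le\mathbb{E}^{\mathrm{br}}_{x}$ to be said explicitly. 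Up to that point you have recovered the paper's partial result, nothing more.

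The genuine gap is your step (c). The pointwise inequality $E_{q}\ge E_{c}$ is equivalent to $\phi'\ge 0$, and you propose to obtain it from convexity of $\phi$, i.e.\ from the variance domination $\operatorname{Var}_{q}(H)\le\operatorname{Var}_{c}(H)$ at every $\beta$. You do not prove this for any class of potentials beyond checking the harmonic oscillator, and you acknowledge it is ``the crux''; so the proposal is a programme, not a proof. Note moreover that convexity of $\phi$ (together with (a),(b)) is \emph{strictly stronger} than Conjecture \ref{Conjecture 1.2.} itself --- $\phi'\ge0$ does not require $\phi''\ge0$ --- so this route deliberately overshoots, and it is not known (nor argued by you) that the variance inequality holds for general confining $V$; the $\nabla V$ correlation terms you mention are exactly where such an argument would have to do real work. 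For comparison, the paper's own further partial progress goes in a different direction: for homogeneous potentials $V(r)=r^{\nu}$ it shows (Propositions \ref{propd1}--\ref{propd3}) that $E_{q}(\beta,h)>E_{c}(\beta)$ is equivalent to $h^{N}Z_{q}(\beta,h)$ being decreasing in $h$ (Conjecture \ref{Conjecture 4.1.}), i.e.\ it trades the $\beta$-derivative statement for an $h$-derivative statement rather than for a second $\beta$-derivative as you do. Either reduction leaves the conjecture open; your write-up should state clearly that steps (a),(b),(d) prove only the integrated inequality already in the paper, and that step (c) is an unproved hypothesis.
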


\begin{conj}\label{Conjecture 1.3.}
The following asymptotic equalities
\begin{equation} \label{1.9}
(2{\pi}h)^{N}Z_{q}(\beta,h){\thicksim}Z_{c}(\beta),\quad \beta{\to}+0,
\end{equation}
\begin{equation} \label{1.10}
E_{q}(\beta,h){\thicksim}E_{c}(\beta),\quad \beta{\to}+0
\end{equation}
are valid.
\end{conj}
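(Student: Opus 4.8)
\medskip
\noindent\textit{Proof strategy.} I would deduce the two asymptotic relations from sharp two--sided estimates on the quantum statistical sum, treating the mean energy as a $\beta$--derivative. Write $L=-\frac{h^{2}}{2m}\Delta+V(q)$ for the energy operator; since its spectrum is a sequence $\{E_{n}(h)\}$ the potential $V$ is confining, so $Z_{q}(\beta,h)=\operatorname{Tr}e^{-\beta L}=\int_{\BR^{N}}e^{-\beta L}(q,q)\,dq$, while the Gaussian $p$--integration in (\ref{1.2})--(\ref{1.3}) gives $Z_{c}(\beta)=(2\pi m/\beta)^{N/2}\int_{\BR^{N}}e^{-\beta V(q)}\,dq$. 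A quick sanity check: for a homogeneous $V(q)=|q|^{\alpha}$ the dilation $q\mapsto\lambda q$ conjugates $L$ into $\lambda^{\alpha}$ times the same operator with $h$ replaced by $h\lambda^{-(2+\alpha)/2}$, so the choice $\lambda=\beta^{-1/\alpha}$ recasts the high--temperature limit $\beta\to+0$ as the semiclassical limit $h\to0$ of the Wigner--Kirkwood relation (\ref{1.4}); matching the (explicit, identical) powers of $\beta$ on the two sides then gives (\ref{1.9}) for power--type potentials.

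\smallskip
\noindent\textit{Step 1: the sandwich.} For a general confining $V$ I would use the Feynman--Kac (Wiener integral) form of the heat kernel on the diagonal,
\[
e^{-\beta L}(q,q)=\Bigl(\tfrac{m}{2\pi h^{2}\beta}\Bigr)^{N/2}\,\mathbb{E}\Bigl[\exp\Bigl(-\int_{0}^{\beta}V(q+\omega_{s})\,ds\Bigr)\Bigr],
\]
where $\omega$ is the Brownian bridge on $[0,\beta]$ with diffusion constant $h^{2}/m$. Integrating in $q$ and multiplying by $(2\pi h)^{N}$ converts the prefactor into $(2\pi m/\beta)^{N/2}$, the very Gaussian momentum factor in $Z_{c}$, so that
\[
\frac{(2\pi h)^{N}Z_{q}(\beta,h)}{Z_{c}(\beta)}=\frac{\displaystyle\int_{\BR^{N}}\mathbb{E}\bigl[e^{-\int_{0}^{\beta}V(q+\omega_{s})\,ds}\bigr]\,dq}{\displaystyle\int_{\BR^{N}}e^{-\beta V(q)}\,dq}\ \le\ 1 ,
\]
the inequality being the Golden--Thompson bound $\operatorname{Tr}e^{-\beta(K+V)}\le\operatorname{Tr}\bigl(e^{-\beta K}e^{-\beta V}\bigr)$ with $K=-\frac{h^{2}}{2m}\Delta$ --- which is precisely Conjecture 1.1. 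For the matching lower bound, Jensen's inequality (convexity of $e^{-x}$) gives $\mathbb{E}\bigl[e^{-\int_{0}^{\beta}V(q+\omega_{s})\,ds}\bigr]\ge e^{-\Psi_{\beta}(q)}$ with $\Psi_{\beta}(q)=\int_{0}^{\beta}(G_{\sigma_{s}^{2}}*V)(q)\,ds$ and $\sigma_{s}^{2}=\frac{h^{2}}{m}\,\frac{s(\beta-s)}{\beta}$; hence the ratio above is squeezed between $\bigl(\int e^{-\Psi_{\beta}(q)}\,dq\bigr)/\bigl(\int e^{-\beta V(q)}\,dq\bigr)$ and $1$.

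\smallskip
\noindent\textit{Step 2: the limit $\beta\to+0$ and the energy.} It remains to show the lower quotient tends to $1$. Writing $\Psi_{\beta}(q)=\beta V(q)+R_{\beta}(q)$ with $R_{\beta}(q)=\int_{0}^{\beta}\bigl[(G_{\sigma_{s}^{2}}*V)(q)-V(q)\bigr]\,ds\approx\frac{h^{2}\beta^{2}}{12m}\,\Delta V(q)$ for $V\in C^{2}$, that quotient equals $\int e^{-R_{\beta}(q)}\,d\mu_{\beta}(q)$ against the Gibbs probability measure $\mu_{\beta}\propto e^{-\beta V}\,dq$; as $\beta\to+0$ the mass of $\mu_{\beta}$ escapes to the shell $\{V(q)\asymp 1/\beta\}$, where $\beta^{2}\Delta V(q)\to0$ for a polynomially growing $V$, so $R_{\beta}\to0$ in $\mu_{\beta}$--probability and a dominated--convergence argument (using $R_{\beta}\ge0$ when $V$ is convex) yields (\ref{1.9}). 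For (\ref{1.10}) one can run the same sandwich one derivative higher --- for $\operatorname{Tr}(Le^{-\beta L})$ against $\int H(p,q)e^{-\beta H(p,q)}\,dp\,dq=E_{c}(\beta)Z_{c}(\beta)$ --- and divide the resulting equivalences; equivalently, and more transparently, one invokes Weyl's law $N(E):=\#\{n:E_{n}(h)\le E\}\sim(2\pi h)^{-N}\operatorname{vol}\{H(p,q)\le E\}$ (valid for \emph{fixed} $h$ as $E\to\infty$) together with the Karamata Abelian theorem applied to the measures $dN(E)$ and $E\,dN(E)$, which delivers both (\ref{1.9}) and (\ref{1.10}) from the regular variation of the classical phase volume $\Phi(E)=\operatorname{vol}\{H\le E\}$.

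\smallskip
\noindent\textit{Main obstacle.} The crux is the uniformity in $q$ of the Feynman--Kac error $R_{\beta}$: it scales like $\beta^{2}$ times the second derivatives of $V$ on the expanding shell $\{V\asymp 1/\beta\}$, which is harmless for polynomially growing $V$ but demands a lower bound sharper than Jensen's once $V$ grows much faster. The Tauberian route is more robust --- it only asks that Weyl's law hold in the stated form and that $\Phi$ be regularly varying --- so the real work is to fix the precise class of admissible $V$ and to justify the differentiation that carries (\ref{1.9}) into (\ref{1.10}): asymptotic equivalence of two functions does not by itself force equivalence of their derivatives, and one has to exploit convexity of $\beta\mapsto\log Z_{q}(\beta,h)$ and $\beta\mapsto\log Z_{c}(\beta)$ (their second $\beta$--derivatives are energy variances, hence nonnegative), which is a further reason to prefer obtaining (\ref{1.10}) directly from the Abelian theorem for $E\,dN(E)$.
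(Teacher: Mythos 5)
Your target is only half a theorem in the paper: relation (1.9) is proved there (Theorem 2.2) by quoting D.~Ray's asymptotic relation (2.5) for $\sum_n e^{-t\lambda_n}$ and rescaling $t=h^2\beta/m$, $\lambda_n=mE_n/h^2$, while (1.10) is left as a genuine conjecture --- the paper establishes only the integrated inequality $\int_{+0}^{\beta}(E_q-E_c)\,d\gamma\ge 0$ (Theorem 3.1), precisely because, as you yourself note, an asymptotic equivalence cannot simply be differentiated. Your Steps 1--2 for (1.9) run through the same Wiener-integral machinery that underlies Ray's theorem, but re-derived under noticeably stronger hypotheses: you work on $\Omega=\mathbb{R}^N$ (Ray's setting is a domain with Dirichlet boundary, where the Feynman--Kac kernel involves killed Brownian motion and the lower bound needs an ``imperceptibility of the boundary'' argument, not just Jensen), and your passage to the limit asks for $V\in C^2$, convexity and polynomial growth, versus Ray's H\"older condition (2.1). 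The ``dominated convergence'' step is also not yet a proof: $R_\beta\ge 0$ gives $e^{-R_\beta}\le 1$, which only controls the $\limsup$; to get $\int e^{-R_\beta}\,d\mu_\beta\to 1$ you must show $R_\beta\to 0$ uniformly enough on the region carrying the mass of $\mu_\beta$, and the pointwise estimate $R_\beta\approx\frac{h^2\beta^2}{12m}\Delta V$ on the shell $\{V\asymp 1/\beta\}$ is exactly the uniformity you flag as the obstacle, so it needs to be established rather than assumed.

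The substantive gap is (1.10). ``Running the same sandwich one derivative higher'' is not available: Golden--Thompson/Jensen bounds $\operatorname{Tr}e^{-\beta L}$, not $\operatorname{Tr}(Le^{-\beta L})$, and differentiating the sandwich in $\beta$ destroys the inequalities. Your two rescue ideas are sound in principle but each imports hypotheses absent from the statement. The Karamata/Abelian route needs Weyl's law $N(E)\sim(2\pi h)^{-N}\operatorname{vol}\{H\le E\}$ for the Schr\"odinger operator (itself a nontrivial theorem with its own conditions on $V$) plus regular variation of the phase volume. The convexity route does work --- bound $E_q(\beta)=-(\log Z_q)'(\beta)$ by difference quotients over $[\beta(1-\eps),\beta]$ and $[\beta,\beta(1+\eps)]$, insert (1.9) in the form $\log Z_c-\log\big((2\pi h)^NZ_q\big)=o(1)$, and let $\eps\to 0$ after $\beta\to 0$ --- but only if $\log Z_c(\beta)$ is regularly varying, so that the error term $o(1)/(\eps\beta)$ is negligible against $E_c(\beta)$. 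Neither argument is carried out, so as written you prove no more of (1.10) than the paper does; if you do add the regular-variation hypothesis (automatic for the homogeneous potentials of Example 4.3), the Abelian argument would go beyond the paper and actually settle (1.10) for that class, which would be worth writing up carefully.
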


It is essential that  the limits, which are considered in (\ref{1.4}) and (\ref{1.9}),
are different. Note also that $\beta=k/T$, which means that the relation
$\beta{\to}+0$ is equivalent to $T{\to}+\infty$.

We proved (see \cite{7, 8}) that in   important special cases (one-dimensional potential well and harmonic oscillator) relations(\ref{1.7})-(\ref{1.10}) are true. In the present paper we consider a much more general case.

In Section 5 we consider the quantum energy  E and the entropy S together. 
One could repeat a statement from  \cite[p. vii]{Ts}:
"... virtually nothing more basically than energy and entropy deserves the
qualification of pillars of modern physics".
The connection between E and S we interpret in terms of game theory.
The necessity of the game theory approach can be explained in the following
way. According to the second law of thermodynamics,
physical system in equilibrium has the maximal entropy
among all the states with the same energy.
So, a problem of the conditional extremum appears,
but the corresponding equation for the Lagrange multiplier is transcendental   and very complicated.
Therefore,  another argumentation is needed to find the
basic Gibbs formulas (see \cite{2,4}). 

In our approach we consider another extremal problem.
For that purpose we fix the Lagrange multiplier $\beta=k/T$
(not energy), that is, we fix the temperature and introduce the compromise function
$F={-\beta}E_{q}(\beta,h)+S_{q}(\beta,h)$. 
Then the mean energy $E_{q}(\beta,h)$ and the entropy $S_{q}(\beta,h)$ are two players of a game and 
the compromise result
is the extremum  point of $F$.
The obtained results have an interesting intersection with the results from game theory, namely, the transition from 
the classical(determined strategy) to quantum (probabilistic strategy) mechanics leads to a gain for both players.
Basic formulas for the quantum energy and for the entropy follow from this result.

\section{General case, statistical sum}\label{sec2}

 We use measure and integration connected with Wiener processes (see \cite{3, 5, 10}). With the help of these notions we formulate the important D. Ray's results \cite{6}.

 \begin{thm}\label{Theorem 2.1.}
  \textrm{(D. Ray \cite{6})}
  Let $\Omega$ be an open set in $R^{N}$ such that at each
 boundary  point $x$ of $\Omega$ there is a sphere with the center
 at $x$, some open sector
 of which is entirely outside the closure  $\overline{\Omega}$ of  $\Omega$. \\
 Let $\wt V(x)$ be a non-negative  function, which is bounded on each bounded subset of
  $\overline{\Omega}$ and satisfies a Lipschitz condition
  \begin{equation} \label{2.1}
   |\wt V(x^{\prime})-\wt V(x)|{\leq}M(x)|x^{\prime}-x|^{\alpha},\quad 0<\alpha{\leq}1,\quad
 x^{\prime},x\,{\in}\, \overline{\Omega}.\end{equation}
 Suppose also that either $\Omega$ is bounded or that
 \begin{equation} \label{2.2}
 \lim V(x)=\infty,\quad |x|{\to}\infty,\quad x{\in}\Omega.
 \end{equation}
 Then the differential operator
 \begin{equation} \label{2.3}
   \wt Lu=-\frac{1}{2}{\Delta}u+\wt V(x)u
   \end{equation}
on  $L^{2}(\overline{\Omega})$,  where $u=0$ on the boundary,  has a discrete spectrum
 $\lambda_{n}>0$ and
 \begin{equation} \label{2.4}
 \sum_{n=1}^{\infty}e^{-t \lambda_{n}}{\leq}
 (2{\pi}t)^{-N/2}\int_{\overline{\Omega}}e^{-t\wt V(x)}dx,\quad t>0.
 \end{equation}
 \end{thm}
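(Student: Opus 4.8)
\medskip
\noindent\textbf{Proof proposal.}
The plan is to read (\ref{2.4}) as a pointwise bound on the diagonal of the heat kernel of $\wt L$ with Dirichlet data and then integrate over $\overline{\Omega}$. Write $g_{t}(x,y)=(2\pi t)^{-N/2}\exp\big(-|x-y|^{2}/2t\big)$ for the Gauss--Weierstrass kernel, i.e.\ the transition density of the Brownian motion $b(\cdot)$ with generator $\tfrac12\Delta$. The hypotheses of the theorem are exactly what puts us in the Feynman--Kac regime: the exterior-sector condition makes every point of $\partial\Omega$ regular, so the motion killed on first exit from $\Omega$ generates a sub-Markov semigroup coinciding with $e^{-t\wt L}$; non-negativity, local boundedness and the Lipschitz bound (\ref{2.1}) make the multiplicative functional $\exp(-\int_{0}^{t}\wt V(b(s))\,ds)$ well defined and the kernel
\[
p^{\Omega}_{t}(x,y)=g_{t}(x,y)\,\mathbb{E}^{t}_{x\to y}\Big[\exp\Big(-\int_{0}^{t}\wt V(b(s))\,ds\Big)\,\mathbf{1}_{\{b(s)\in\Omega,\ 0\le s\le t\}}\Big]
\]
jointly continuous, where $\mathbb{E}^{t}_{x\to y}$ is expectation with respect to the Brownian bridge of duration $t$ from $x$ to $y$; and the alternative (\ref{2.2}) (or boundedness of $\Omega$) makes the resolvent of $\wt L$ compact, so the spectrum is discrete, $\lambda_{n}>0$, and $\sum_{n}e^{-t\lambda_{n}}=\|e^{-(t/2)\wt L}\|_{HS}^{2}=\int_{\overline{\Omega}}p^{\Omega}_{t}(x,x)\,dx$ (an identity in $[0,\infty]$).

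The core estimate is then short. On the diagonal $g_{t}(x,x)=(2\pi t)^{-N/2}$, and the marginal of the bridge from $x$ to $x$ at a time $s\in[0,t]$ has density $g_{s}(x,z)g_{t-s}(z,x)/g_{t}(x,x)$. Since $\wt V\ge 0$ lives only on $\overline{\Omega}$ one cannot yet drop the confinement indicator, but on the event where the path stays in $\Omega$ Jensen's inequality for the convex function $u\mapsto e^{-u}$, applied against the uniform probability measure $ds/t$ on $[0,t]$, gives path by path
\[
\exp\Big(-\int_{0}^{t}\wt V(b(s))\,ds\Big)\le \frac1t\int_{0}^{t}e^{-t\wt V(b(s))}\,ds .
\]
Multiplying by the indicator, taking expectations, using Fubini, and relaxing $\{b(s)\in\Omega,\ 0\le s\le t\}$ to the single condition $\{b(s)\in\Omega\}$, the factor $g_{t}(x,x)$ cancels the normalisation of the bridge density and one obtains
\[
p^{\Omega}_{t}(x,x)\le \frac1t\int_{0}^{t}\!\int_{\Omega}e^{-t\wt V(z)}\,g_{s}(x,z)\,g_{t-s}(z,x)\,dz\,ds .
\]

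To finish, integrate in $x$ over $\overline{\Omega}$, enlarge that integral to $\mathbb{R}^{N}$ (the integrand is non-negative), interchange the order of integration, and carry out the $x$-integral first by the Chapman--Kolmogorov identity $\int_{\mathbb{R}^{N}}g_{s}(x,z)\,g_{t-s}(z,x)\,dx=g_{t}(z,z)=(2\pi t)^{-N/2}$, which is independent of $s$ and $z$. This collapses both the $ds$ and the $dx$ integrations and yields
\[
\sum_{n=1}^{\infty}e^{-t\lambda_{n}}=\int_{\overline{\Omega}}p^{\Omega}_{t}(x,x)\,dx\le(2\pi t)^{-N/2}\int_{\Omega}e^{-t\wt V(z)}\,dz\le(2\pi t)^{-N/2}\int_{\overline{\Omega}}e^{-t\wt V(z)}\,dz ,
\]
which is exactly (\ref{2.4}).

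So the analytic substance reduces to one convexity inequality and one Gaussian convolution. The real work, and where I expect the main difficulty, is the preliminary step sketched in the first paragraph: justifying the Feynman--Kac representation of $p^{\Omega}_{t}$, its continuity up to the diagonal, the identity $\sum_{n}e^{-t\lambda_{n}}=\int_{\overline{\Omega}}p^{\Omega}_{t}(x,x)\,dx$, and the discreteness of the spectrum of $\wt L$ under Dirichlet conditions. This is precisely where the boundary-regularity hypothesis on $\Omega$ and the regularity hypotheses on $\wt V$ are consumed, and it rests on the potential theory of Wiener measure cited in the text; granted that toolbox, the theorem follows almost immediately. An alternative, purely PDE route would compare the heat evolution generated by $\wt L$ on $\Omega$ with an explicit Gaussian supersolution via the parabolic maximum principle, but the Wiener-integral argument above is shorter and matches the methods announced for the rest of the paper.
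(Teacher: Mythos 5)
The paper offers no proof of this statement at all: Theorem \ref{Theorem 2.1.} is quoted verbatim from D.~Ray's paper \cite{6}, so there is nothing internal to compare your argument against except that citation. Your sketch is, in substance, the same Wiener-integral route that Ray and the sources in Remark \ref{Remark 2.1.} (Simon \cite{10}, Ch.~3) use, and its core estimate is correct: Jensen's inequality applied pathwise with the uniform measure $ds/t$ gives $\exp\big(-\int_0^t\wt V(b(s))ds\big)\le \frac1t\int_0^t e^{-t\wt V(b(s))}ds$, the bridge marginal density $g_s(x,z)g_{t-s}(z,x)/g_t(x,x)$ cancels the diagonal factor, and Chapman--Kolmogorov collapses the $x$- and $s$-integrations to give exactly the bound \eqref{2.4}. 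Do be aware of where the actual labor in Ray's theorem lies, which you explicitly defer: the Feynman--Kac representation of the Dirichlet semigroup on a domain satisfying only the exterior-sector condition, the joint continuity of $p^{\Omega}_t$ up to the diagonal, the trace identity $\sum_n e^{-t\lambda_n}=\int p^{\Omega}_t(x,x)\,dx$, and the discreteness and positivity of the spectrum under the alternative hypotheses ($\Omega$ bounded or $V\to\infty$) constitute the bulk of Ray's 1954 paper, and the hypotheses \eqref{2.1}--\eqref{2.2} are consumed precisely there. So your proposal is a correct reduction of \eqref{2.4} to those standard facts rather than a self-contained proof; as a verification that the inequality follows from the Feynman--Kac toolbox the paper already invokes, it is sound.
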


 D. Ray proved also the relation
 \begin{equation} \label{2.5}
 \sum_{n=1}^{\infty}e^{-t \lambda_{n}}{\thicksim}
 (2{\pi}t)^{-N/2}\int_{\overline{\Omega}}e^{-t\wt V(x)}dx,\quad t{\to}+0.
 \end{equation}

 \begin{rem}\label{Remark 2.1.} 
 For the case $\Omega=R^{N}$, the results similar to
 (\ref{2.4}) and (\ref{2.5})  were deduced in
 a number of papers (see the results and references in B. Simon's  book \cite{10}, Chapter 3).
  \end{rem}

 The relation (\ref{2.5}) can be interpreted as a weak form of M. Kac's principle of
 imperceptibility of the boundary \cite{3} in the case of equation (\ref{2.3}).
 We note that our paper \cite{9} is dedicated to a weak form of the Kac's principle of imperceptibility of the boundary  in the case of the stable processes. It is
 interesting that D. Ray's results can be interpreted in a new way.

 We shall show that inequalities (\ref{1.6}) (Conjecture \ref{Conjecture 1.1.} ) and (\ref{1.8})
 (Conjecture \ref{Conjecture 1.3.}) follow from (\ref{2.4}) and (\ref{2.5}), respectively. To do it let us consider the Schr\"odinger differential operator
 \begin{equation} \label{2.6}
  L{\Psi}=-\frac{h^{2}}{2m}\Delta{\Psi}+V(x)\Psi
 \end{equation}
 in an open subset $\Omega$ of the $N$-dimensional Euclidean space $R^{N}$.
 Assume that
\begin{equation} \label{2.7}
 \Psi|_{\Gamma}=0,
 \end{equation}
where $\Gamma$ is the boundary of $\Omega$. Taking into account the relation
$\lambda_{n}=\frac{m}{h^{2}}E_{n}$ we see that in the case (\ref{2.6}) inequality (\ref{2.4}) has the
form
\begin{equation} \label{2.8}
 \sum_{n}e^{-t\frac{m}{h^{2}}E_{n}}{\leq}
 (2{\pi}t)^{-N/2}\int_{\overline{\Omega}}e^{-t\frac{m}{h^{2}}V(x)}dx,\quad t>0.
 \end{equation}
 From (\ref{1.2}), where the integral in $p$ is taken over
 $R^N$, and from (\ref{1.3}), we obtain
 \begin{equation} \label{2.9}
   Z_{c}(\beta)=
 (2{\pi}m/{\beta})^{N/2}\int_{\overline{\Omega}}e^{-{\beta}V(x)}dx.
 \end{equation}
 Putting $t=\frac{h^{2}}{m}{\beta}$ and taking into account (\ref{1.1}), we write (\ref{2.8}) in the form
 \begin{equation} \label{2.10}
  (2{\pi}h)^{N}Z_{q}(\beta ,h){\leq}Z_{c}(\beta),\quad \beta=k/T.
  \end{equation}
 In the same way we deduce from  (\ref{2.5}) that
  \begin{equation} \label{2.11}
   (2{\pi}h)^{N}Z_{q}(\beta ,h){\thicksim }Z_{c}(\beta),\quad\beta=k/T{\to}0.
  \end{equation}
  So we proved the following assertion.

  \begin{thm}\label{Theorem 2.2.}
 Let the conditions of Theorem \ref{Theorem 2.1.} be fulfilled. Then relations (\ref{1.7}) and (\ref{1.9})  are true.
 \end{thm}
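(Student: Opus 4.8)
The plan is to reduce the statement to D.~Ray's estimates of Theorem~\ref{Theorem 2.1.} by rescaling the Schr\"odinger operator (\ref{2.6}) into Ray's normalized form (\ref{2.3}). First I would set $\wt V(x)=\tfrac{m}{h^{2}}V(x)$. Since $h>0$ and $m>0$ are fixed, $\wt V$ inherits from $V$ non-negativity, the H\"older estimate (\ref{2.1}) (with $M(x)$ replaced by $\tfrac{m}{h^{2}}M(x)$), and the growth condition (\ref{2.2}); hence the hypotheses of Theorem~\ref{Theorem 2.1.} hold for the operator $\wt L=-\tfrac12\Delta+\wt V$. Comparing (\ref{2.6}) with (\ref{2.3}) and multiplying an eigenvalue equation $\wt Lu=\lambda u$ by $\tfrac{h^{2}}{m}$ shows that the Dirichlet spectra are related by $\lambda_{n}=\tfrac{m}{h^{2}}E_{n}(h)$; in particular the spectrum of $L$ is discrete with $E_{n}(h)\to\infty$, so $Z_{q}(\beta,h)$ in (\ref{1.1}) converges. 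Substituting $\lambda_{n}=\tfrac{m}{h^{2}}E_{n}(h)$ and $\wt V=\tfrac{m}{h^{2}}V$ into Ray's inequality (\ref{2.4}) yields (\ref{2.8}).

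Next I would perform the Gibbs-type change of variable $t=\tfrac{h^{2}}{m}\beta$ in (\ref{2.8}). Then $t\lambda_{n}=\beta E_{n}(h)$, so the left side becomes $Z_{q}(\beta,h)$; on the right side $t\wt V(x)=\beta V(x)$ and $(2\pi t)^{-N/2}=(2\pi h^{2}\beta/m)^{-N/2}$. Multiplying through by $(2\pi h)^{N}$ and using the identity $(2\pi h)^{N}(2\pi h^{2}\beta/m)^{-N/2}=(2\pi m/\beta)^{N/2}$ together with the elementary Gaussian momentum integral that gives (\ref{2.9}), namely $Z_{c}(\beta)=(2\pi m/\beta)^{N/2}\int_{\overline{\Omega}}e^{-\beta V(x)}\,dx$, I obtain $(2\pi h)^{N}Z_{q}(\beta,h)\le Z_{c}(\beta)$ for every $h>0$ and $\beta>0$, which is (\ref{1.7}). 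Applying the same substitution verbatim to Ray's asymptotic relation (\ref{2.5}) produces $(2\pi h)^{N}Z_{q}(\beta,h)\thicksim Z_{c}(\beta)$ as $\beta\to+0$, that is (\ref{1.9}). This will complete the proof.

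The substantive input is entirely Ray's Theorem~\ref{Theorem 2.1.}; once it is granted, the remaining steps are a change of variables and bookkeeping of the powers of $2\pi$, $h$, $m$ and $\beta$, so I do not expect a genuine technical obstacle. The one point that deserves attention is conceptual: unlike the Wigner--Kirkwood limit (\ref{1.4}), where $h\to0$ with $\beta$ fixed, here $h$ stays fixed and the asymptotics are taken as $\beta\to+0$ (equivalently $T\to+\infty$), and the scaling $t=\tfrac{h^{2}}{m}\beta$ makes transparent that these are different regimes. It is precisely because (\ref{2.4}) holds for all $t>0$, not merely in a limit, that we get the inequality (\ref{1.7}) for all $h>0$ rather than only an asymptotic statement.
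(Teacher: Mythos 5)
Your proposal is correct and follows essentially the same route as the paper: rescale the Schr\"odinger operator (\ref{2.6}) to Ray's normalized form via $\wt V=\tfrac{m}{h^{2}}V$ and $\lambda_{n}=\tfrac{m}{h^{2}}E_{n}(h)$, substitute $t=\tfrac{h^{2}}{m}\beta$ into (\ref{2.4}) and (\ref{2.5}), and match constants using the Gaussian momentum integral (\ref{2.9}). The bookkeeping identity $(2\pi h)^{N}(2\pi h^{2}\beta/m)^{-N/2}=(2\pi m/\beta)^{N/2}$ checks out, so no gap remains.
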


 \begin{cor}\label{Corollary 2.1.}
Let the conditions of Theorem \ref{Theorem 2.1.} be fulfilled. If
  \begin{equation} \label{2.12}
    \int_{\overline{\Omega}}e^{-{\beta}V(x)}dx<\infty,
    \end{equation}
  then
  \begin{equation} \label{2.13}
   \sum_{n=1}^{\infty}e^{-\beta E_{n}}<\infty.\end{equation}
 \end{cor}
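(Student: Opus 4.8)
The plan is to obtain (\ref{2.13}) as an immediate consequence of the already-proved inequality (\ref{2.10}) (equivalently (\ref{2.8})) together with the explicit formula (\ref{2.9}) for the classical statistical sum; no new analytic input is needed beyond Theorem \ref{Theorem 2.1.}.

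First I would record that, under the conditions of Theorem \ref{Theorem 2.1.}, the rescaled potential $\wt V(x)=\frac{m}{h^{2}}V(x)$ satisfies all the hypotheses of that theorem (non-negativity, boundedness on bounded subsets, the Lipschitz condition (\ref{2.1}), and either boundedness of $\Omega$ or the growth condition (\ref{2.2})), since multiplication by the positive constant $m/h^{2}$ preserves each of these properties; hence the operator $L$ in (\ref{2.6}) has a discrete spectrum with eigenvalues $E_{n}=\frac{h^{2}}{m}\lambda_{n}>0$, and the series $\sum_{n=1}^{\infty}e^{-\beta E_{n}}$ in (\ref{2.13}) is meaningful. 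Next, I would substitute $t=\frac{h^{2}}{m}\beta>0$ into (\ref{2.8}), so that $t\frac{m}{h^{2}}=\beta$; the left-hand side then becomes $\sum_{n=1}^{\infty}e^{-\beta E_{n}}$ and the right-hand side becomes $(2\pi h^{2}\beta/m)^{-N/2}\int_{\overline{\Omega}}e^{-\beta V(x)}\,dx$. By hypothesis (\ref{2.12}) the last integral is finite, hence the right-hand side is finite, and (\ref{2.13}) follows at once.

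Equivalently, one may run the same argument through (\ref{2.9}) and (\ref{2.10}): formula (\ref{2.9}) together with assumption (\ref{2.12}) gives $Z_{c}(\beta)=(2\pi m/\beta)^{N/2}\int_{\overline{\Omega}}e^{-\beta V(x)}\,dx<\infty$, whence Theorem \ref{Theorem 2.2.} (inequality (\ref{1.7})) yields $(2\pi h)^{N}Z_{q}(\beta,h)\le Z_{c}(\beta)<\infty$, i.e. $Z_{q}(\beta,h)=\sum_{n=1}^{\infty}e^{-\beta E_{n}}<\infty$ by (\ref{1.1}).

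There is no substantial obstacle here: the only points needing a (routine) check are that the positive scaling $V\mapsto \frac{m}{h^{2}}V$ does not destroy the hypotheses of Theorem \ref{Theorem 2.1.}, and that the change of variable $t=\frac{h^{2}}{m}\beta$ is admissible — it is, since $\beta>0$ and $h>0$ force $t>0$, which is exactly the range in (\ref{2.8}). Thus the corollary is essentially a restatement of Theorem \ref{Theorem 2.2.} adapted to the convergence of the quantum statistical sum.
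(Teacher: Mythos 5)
Your proposal is correct and follows essentially the same route as the paper: the corollary is an immediate consequence of inequality (\ref{2.10}) (obtained from D.~Ray's bound (\ref{2.8}) via $t=\frac{h^{2}}{m}\beta$) together with the explicit formula (\ref{2.9}), so that (\ref{2.12}) forces $Z_{c}(\beta)<\infty$ and hence $Z_{q}(\beta,h)=\sum_{n=1}^{\infty}e^{-\beta E_{n}}<\infty$. Your additional check that the rescaling $V\mapsto\frac{m}{h^{2}}V$ preserves the hypotheses of Theorem \ref{Theorem 2.1.} is exactly the routine verification the paper leaves implicit.
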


 \begin{ex}\label{Example 2.1.}
  \textbf{(Potential well.)}  
   If $\Omega$ is bounded and $V(x)=0$, then according to (\ref{2.9}) we have
  \begin{equation} \label{2.14}
   Z_{c}(\beta)=(2{\pi}m/{\beta})^{N/2}\mathrm{vol}(\Omega) ,
  \end{equation}
  where "vol" means volume.
\end{ex}

  \section{General case, mean energy}\label{sec3}

  The following assertion confirms partially Conjecture \ref{Conjecture 1.2.}.

  \begin{thm}\label{Theorem 3.1.}
Let the conditions of Theorem \ref{Theorem 2.1.} be fulfilled and
  \begin{equation} \label{3.1}
   E_{q}(\beta ,h)<\infty,\qquad E_{c}(\beta )<\infty .
   \end{equation}
  Then the inequality
  \begin{equation} \label{3.2}
   \int_{+0}^{\beta}\big(E_{q}(\gamma,h)-E_{c}(\gamma)\big)d{\gamma}{\geq}0,\qquad
  \beta>0
  \end{equation}
  is true.
  \end{thm}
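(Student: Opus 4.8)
The plan is to relate the integral $\int_{+0}^{\beta}(E_q(\gamma,h)-E_c(\gamma))\,d\gamma$ to the logarithms of the two statistical sums, and then invoke Theorem~\ref{Theorem 2.2.}. The starting observation is the standard thermodynamic identity: differentiating $\log Z$ with respect to $\beta$ recovers minus the mean energy. Concretely, from \eqref{1.1} and \eqref{1.5} one has
\begin{equation*}
\frac{\partial}{\partial \gamma}\log Z_q(\gamma,h) = -E_q(\gamma,h),
\end{equation*}
and likewise, from \eqref{2.9} and \eqref{1.6},
\begin{equation*}
\frac{\partial}{\partial \gamma}\log Z_c(\gamma) = -E_c(\gamma).
\end{equation*}
The second of these is easy to verify directly by differentiating \eqref{2.9}: the factor $(2\pi m/\gamma)^{N/2}$ contributes $-\tfrac{N}{2\gamma}$, while differentiating $\int_{\overline\Omega} e^{-\gamma V(x)}dx$ contributes $-\langle V\rangle$, and their sum is exactly $-E_c(\gamma)$ because $E_c(\gamma) = \tfrac{N}{2\gamma} + \langle V\rangle$ by \eqref{1.3} and the Gaussian integration over the momenta.

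Next I would integrate these two identities from $+0$ to $\beta$. Writing $\Phi_q(\gamma) = -\log\big((2\pi h)^N Z_q(\gamma,h)\big)$ and $\Phi_c(\gamma) = -\log Z_c(\gamma)$ (the constant $(2\pi h)^N$ is harmless since its logarithm does not depend on $\gamma$), we get
\begin{equation*}
\int_{+0}^{\beta}\big(E_q(\gamma,h)-E_c(\gamma)\big)\,d\gamma
= \big(\Phi_q(\beta)-\Phi_c(\beta)\big) - \lim_{\gamma\to+0}\big(\Phi_q(\gamma)-\Phi_c(\gamma)\big).
\end{equation*}
By Theorem~\ref{Theorem 2.2.}, inequality \eqref{1.7} gives $(2\pi h)^N Z_q(\beta,h)\le Z_c(\beta)$, hence $\Phi_q(\beta)-\Phi_c(\beta)\ge 0$ for every $\beta>0$. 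On the other hand, the asymptotic equality \eqref{1.9} gives $(2\pi h)^N Z_q(\gamma,h)\sim Z_c(\gamma)$ as $\gamma\to+0$, so $\Phi_q(\gamma)-\Phi_c(\gamma)\to 0$ as $\gamma\to+0$. Substituting these two facts into the displayed identity yields $\int_{+0}^{\beta}(E_q(\gamma,h)-E_c(\gamma))\,d\gamma\ge 0$, which is \eqref{3.2}.

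The main obstacle is the justification of the manipulations at the lower endpoint $\gamma\to+0$: one must check that the integral $\int_{+0}^{\beta}$ converges (near $\gamma=0$ both $E_q$ and $E_c$ blow up like $\tfrac{N}{2\gamma}$, but the difference should stay integrable precisely because of \eqref{1.9}), and that the fundamental theorem of calculus applies, i.e.\ that $\log Z_q$ and $\log Z_c$ are absolutely continuous in $\gamma$ on $(0,\beta]$ with the claimed derivatives. For $Z_c$ this is elementary from the explicit formula \eqref{2.9} together with the hypothesis $E_c(\beta)<\infty$; for $Z_q$ one uses that the series \eqref{1.1} converges (Corollary~\ref{Corollary 2.1.}) and can be differentiated term by term on compact subsets of $(0,\infty)$, together with $E_q(\beta,h)<\infty$ from \eqref{3.1}. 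Once these regularity points are in place, the argument is just the two-line computation above.
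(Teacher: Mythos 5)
Your argument is correct and is essentially the paper's own proof: the paper likewise uses $\partial_\gamma \log Z_q=-E_q$, $\partial_\gamma \log Z_c=-E_c$, integrates to get $\int_{+0}^{\beta}(E_q-E_c)\,d\gamma=\log\big(Z_c(\beta)/(2\pi h)^N Z_q(\beta,h)\big)$, and concludes from \eqref{1.7} and \eqref{1.9} of Theorem \ref{Theorem 2.2.}. Your extra remarks on integrability near $\gamma=0$ and term-by-term differentiation only supply justification the paper leaves implicit.
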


  \begin{proof}
 Using relations (\ref{1.1}), (\ref{1.5}) and (\ref{1.2}), (\ref{1.6}) we have
  \begin{equation} \label{3.3}
  E_{q}(\beta,h)=-\frac{{\partial Z_{q}(\beta,h)}}{{\partial \beta}} \big/ Z_{q}(\beta ,h),\quad
 E_{c}(\beta)=-\frac{{\partial Z_{c}(\beta)}}{{\partial \beta}} \big/ Z_{c}(\beta).
 \end{equation}
 Formulas (\ref{3.3}) imply that
 \begin{equation} \label{3.4}
  \int_{\tau}^{\beta}\big(E_{q}(\gamma,h)-E_{c}(\gamma)\big)d{\gamma}=
 \log{\big( Z_{c}(\gamma)/(2{\pi}h)^{N}Z_{q}(\gamma,h)\big) } \Big|_{\tau}^{\beta},
  \end{equation}
  where $0<\tau <\beta$. According to (\ref{2.10}), (\ref{2.11}), and (\ref{3.4}) we obtain
  \begin{equation} \label{3.5}
   \int_{+0}^{\beta}[E_{q}(\gamma,h)-E_{c}(\gamma)]d{\gamma}=
 \log{\Big( Z_{c}(\beta)/(2{\pi}h)^{N}Z_{q}(\beta ,h)\Big) }{\geq}0.
 \end{equation}
 The theorem is proved.
 \end{proof}

 \begin{ex}\label{Example 3.1.}
\textbf{(Potential well.)}
  If $\Omega$ is bounded and $V(x)=0$, then according to (\ref{2.14}) and (\ref{3.3}) we have
  \begin{equation} \label{3.6}
    E_{c}(\beta)=N/2\beta.
    \end{equation}
\end{ex}

  \section{General case, Boltzmann-Gibbs-Shannon entropy}\label{sec4}

  The entropy of a quantum system is defined by the relation (see \cite{2, 4}):
  \begin{equation} \label{4.1}
   S_{q}=-\sum_{n=1}^{\infty}P_{n}{\log{P_{n}}},
   \end{equation}
  where
  \begin{equation} \label{4.2}
    P_{n}(\beta ,h)=e^{-{\beta}E_{n}(h)}\big/ Z_{q}(\beta ,h).
  \end{equation}
 The entropy is one of the fundamental notions in quantum mechanics
 (see some recent results, discussions, and references in 
 \cite{BeS, e0, e3, Ts, Weh1, Weh2}).
  It follows from (\ref{1.1}), (\ref{1.5}) and  (\ref{4.1}), (\ref{4.2}) that
  \begin{equation} \label{4.3}
   S_{q}(\beta,h)={\beta}E_{q}(\beta,h)+{\log{Z_{q}(\beta,h)}}.
   \end{equation}
  The classical definition of the entropy has the form (see \cite[Ch. 1]{4}):
  \begin{equation} \label{4.4}
  S_{c}(\beta,h)={\beta}E_{c}(\beta)+{\log{Z_{c}(\beta)}}-N{\log{(2{\pi}h)}}.
  \end{equation}
  Note that  (\ref{4.4}) contains a regularizing term
  $ -N{\log{(2{\pi}h)}}$.

  \begin{ex}\label{Example 4.1.}   
Now, we consider the case
  \begin{equation} \label{4.5}
    E_{n}(h)=\phi(h)E_{n},\quad E_{n}>0,
  \end{equation}
  where $E_{n}$ does not depend on $h$, $\phi(h)$ is a monotonically increasing
  function and $\phi(+0)=0$. In this case formula (\ref{4.3}) takes the form
  \begin{equation} \label{4.6}
     S_{q}(\beta ,h)=-{\lambda}\Phi^{\prime}(\lambda)\big/ \Phi(\lambda)+{\log{\Phi(\lambda)}},
  \end{equation}
  where $\lambda=\phi(h)\beta$ and
  \begin{equation} \label{4.7}
    \Phi(\lambda)=\sum_{n=1}^{\infty}e^{-{\lambda}E_{n}}.
  \end{equation}
\end{ex}

  \begin{lem}\label{Lemma 4.1.}
The function
  \begin{equation} \label{4.8}
 \Psi(\lambda)=-{\lambda}\Phi^{\prime}(\lambda)/\Phi(\lambda) +
  {\log}\, \Phi(\lambda)
  \end{equation}
is a monotonically decreasing function.
\end{lem}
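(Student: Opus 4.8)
The plan is to compute $\Psi'(\lambda)$ and to show that it is non-positive for $\lambda>0$. First I would record the probabilistic meaning of the ingredients: since $\Phi(\lambda)=\sum_{n}e^{-\lambda E_{n}}$ with $E_{n}>0$, differentiation term by term gives $\Phi'(\lambda)=-\sum_{n}E_{n}e^{-\lambda E_{n}}$ and $\Phi''(\lambda)=\sum_{n}E_{n}^{2}e^{-\lambda E_{n}}$, so that, writing $p_{n}=e^{-\lambda E_{n}}/\Phi(\lambda)$, the quantity $f(\lambda):=-\Phi'(\lambda)/\Phi(\lambda)=\sum_{n}E_{n}p_{n}$ is the mean energy and $\Phi''(\lambda)/\Phi(\lambda)=\sum_{n}E_{n}^{2}p_{n}$ is the second moment. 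Because in our setting the spectrum $\{E_{n}\}$ is discrete with $E_{n}>0$ accumulating only at $+\infty$ (the eigenvalues $\lambda_{n}=\tfrac{m}{h^{2}}E_{n}$ of Theorem~\ref{Theorem 2.1.} obey a Weyl-type growth, or $\Omega$ is bounded), all three series converge for every $\lambda>0$ and the differentiation under the summation sign is legitimate on compact subsets of $(0,\infty)$. This convergence point is the only technical issue, and it is disposed of by the hypotheses inherited from Theorem~\ref{Theorem 2.1.}.

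Next I would use the expression $\Psi(\lambda)=\lambda f(\lambda)+\log\Phi(\lambda)$ and differentiate. Since $\dfrac{d}{d\lambda}\log\Phi(\lambda)=\Phi'(\lambda)/\Phi(\lambda)=-f(\lambda)$, the two copies of $f(\lambda)$ cancel and
\[
\Psi'(\lambda)=f(\lambda)+\lambda f'(\lambda)-f(\lambda)=\lambda f'(\lambda).
\]
Then I would compute
\[
f'(\lambda)=-\frac{\Phi''(\lambda)\,\Phi(\lambda)-\Phi'(\lambda)^{2}}{\Phi(\lambda)^{2}}
=-\Big(\sum_{n}E_{n}^{2}p_{n}-\big(\textstyle\sum_{n}E_{n}p_{n}\big)^{2}\Big),
\]
which is minus the variance of $E$ with respect to the probabilities $p_{n}$. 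By the Cauchy--Schwarz inequality $\Phi'(\lambda)^{2}\le\Phi(\lambda)\,\Phi''(\lambda)$, hence $f'(\lambda)\le 0$, and therefore $\Psi'(\lambda)=\lambda f'(\lambda)\le 0$ for all $\lambda>0$. This yields that $\Psi$ is monotonically decreasing, as claimed.

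I expect no serious obstacle: the heart of the argument is the elementary identity $\Psi'=\lambda f'$ together with Cauchy--Schwarz, and the only thing requiring a word of care is the finiteness of $\Phi''$ and the justification of term-by-term differentiation, both of which follow from the growth of the $E_{n}$. One may additionally note that the inequality is in fact strict for $\lambda>0$, because the infinite set $\{E_{n}\}$ contains at least two distinct values and so the variance is positive; however, the weaker statement "$\le 0$" is all that Lemma~\ref{Lemma 4.1.} requires.
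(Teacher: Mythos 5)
Your proof is correct and follows essentially the same route as the paper: both reduce the claim to showing that $\Psi'(\lambda)=-\lambda\bigl(\Phi''\Phi-(\Phi')^{2}\bigr)/\Phi^{2}\le 0$, i.e.\ that the second log-derivative of $\Phi$ is a (nonnegative) variance. The only cosmetic difference is that the paper writes this variance explicitly as the double sum $\sum_{n>m}(E_{n}-E_{m})^{2}e^{-\lambda(E_{n}+E_{m})}$, whereas you invoke the Cauchy--Schwarz inequality, which is the same fact in disguise.
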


\begin{proof}
 According to (\ref{4.7}) and (\ref{4.8}) the inequality
  \begin{equation} \label{4.9}
 \Psi^{\prime}(\lambda)=
  -{\lambda}\sum_{n>m}(E_{n}-E_{m})^{2}e^{-{\lambda}(E_{n}+E_{m})}/\Phi^{2}(\lambda)<0
  \end{equation}
is true.  The lemma is proved.
\end{proof}
Using Lemma \ref{Lemma 4.1.} we obtain the following statement.

\begin{thm}\label{Theorem 4.1.}
Let conditions (\ref{4.5}) be fulfilled.
Then the entropy  $S_{q}(\beta,h)$ is a monotonically decreasing function with respect to $\beta$ and with respect to $h$.
\end{thm}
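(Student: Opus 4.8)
The plan is to reduce everything to Lemma \ref{Lemma 4.1.} via a change of variables. First I would observe that, under hypothesis (\ref{4.5}), formula (\ref{4.6}) expresses the entropy as $S_q(\beta,h)=\Psi(\lambda)$ with $\lambda=\phi(h)\beta$ and $\Psi$ the function in (\ref{4.8}), where $\Phi$ is given by (\ref{4.7}). This identity is exactly what Example \ref{Example 4.1.} records, so it may be invoked directly.

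Next I would note that $\lambda=\phi(h)\beta$ is, for each fixed value of the other variable, a monotonically increasing function of $\beta$ (since $\phi(h)>0$ for $h>0$) and a monotonically increasing function of $h$ (since $\phi$ is assumed monotonically increasing). Since a composition $\Psi\circ(\text{increasing})$ is monotonically decreasing precisely when $\Psi$ is monotonically decreasing, and Lemma \ref{Lemma 4.1.} asserts $\Psi'(\lambda)<0$, the chain rule gives
\begin{equation}
\frac{\partial S_q}{\partial\beta}=\phi(h)\,\Psi'(\lambda)<0,\qquad
\frac{\partial S_q}{\partial h}=\phi'(h)\,\beta\,\Psi'(\lambda)\leq 0,
\end{equation}
for all $h>0$, $\beta>0$, which is the claim. (One should be a little careful about the $h$-monotonicity: $\phi$ is merely assumed monotonically increasing, not strictly, so the derivative statement uses $\phi'(h)\ge 0$ and the conclusion is monotone non-increasing; if strict monotonicity in $h$ is wanted one needs $\phi$ strictly increasing. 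Alternatively one avoids derivatives entirely and argues directly: $h_1<h_2$ forces $\lambda_1=\phi(h_1)\beta\le\lambda_2=\phi(h_2)\beta$, hence $\Psi(\lambda_1)\ge\Psi(\lambda_2)$ by the lemma.)

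I do not expect a real obstacle here; the statement is essentially a corollary of Lemma \ref{Lemma 4.1.} once the representation (\ref{4.6}) is in hand. The only mild subtlety is bookkeeping about whether "monotonically decreasing" is meant strictly, and whether differentiability of $\phi$ may be assumed — both are sidestepped by the direct monotonicity argument on the composition, which needs only that $\beta\mapsto\phi(h)\beta$ and $h\mapsto\phi(h)\beta$ are non-decreasing and that $\Psi$ is (strictly) decreasing. I would therefore present the proof in that order: invoke (\ref{4.6}), note the monotonicity of $\lambda$ in each variable, and conclude from Lemma \ref{Lemma 4.1.}.
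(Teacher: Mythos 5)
Your argument is correct and is exactly the paper's route: the paper derives Theorem \ref{Theorem 4.1.} directly from the representation (\ref{4.6}) of $S_q$ as $\Psi(\lambda)$ with $\lambda=\phi(h)\beta$ together with Lemma \ref{Lemma 4.1.}, which is precisely your composition argument. Your extra remarks on strict versus non-strict monotonicity of $\phi$ are a reasonable clarification but not a departure from the paper's proof.
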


\begin{ex}\label{Example 4.2.}
\textbf{(Potential well.)}
Consider the spectral problem
\begin{equation} \label{4.17}
  -\frac{h^{2}}{2m}\Delta{u}=Eu,\quad u_{\Gamma}=0
  \end{equation}
in an $N$-dimensional domain $\Omega$ with the boundary $\Gamma$. It is easy to see
that condition (\ref{4.5}) is fulfilled for this problem and
$\phi(h)=h^{2}$. Hence, the assertions of Theorem \ref{Theorem 4.1.} are true in the case of the potential well.
\end{ex} 

\begin{ex}\label{Example 4.3.}
\textbf{(Homogeneous potential.)}
We suppose that the non-negative
potential $V(x)$ is a homogeneous function, that is,
\begin{equation} \label{4.18}
  V(hx)=h^{\nu}V(x),\quad h>0,\quad \nu>0,\quad x{\in}R^{N}.
\end{equation}
\end{ex} 

\begin{rem}\label{Remark 4.3.} 
The function
\begin{equation} \label{4.19}
   V(r)=r^{\nu},\quad r=\Big(\sum_{k=1}^{N}x_{k}^{2}\Big)^{1/2},\quad \nu>0 \end{equation}
satisfies condition (\ref{4.18}).
\end{rem}

Using substitution $x=h^{\alpha}y$ we rewrite the equation
\begin{equation} \label{4.20}
  -\frac{h^{2}}{2m}\Delta{u(x)}+V(x)u(x)=Eu(x) \end{equation}
in the form
\begin{equation} \label{4.21}
-\frac{h^{2-2\alpha}}{2m}\Delta{\wt u(y)}+h^{\alpha{\nu}}V(y)\wt u(y)=E\wt u(y).
\end{equation}
We note that $2-2\alpha=\alpha{\nu}$ if $\alpha=2/(2+\nu)$. In this case
(\ref{4.5}) holds and
\begin{equation} \label{4.22}
\phi(h)=h^{2\nu/(2+\nu)}.
\end{equation}
 Hence, the assertions of Theorem \ref{Theorem 4.1.} are true in the case of homogeneous potentials.
 
Relations  (\ref{1.1}) and (\ref{1.5}) imply the assertion.

\begin{prop}\label{propd1} 
Let a potential V(r) have the form (\ref{4.19}).
Then we have
\begin{equation}\frac{d}{dh}\big(h^{N}Z_{q}(\beta,h)\big)=h^{N-1}Z_{q}(\beta,h)
\big(N-{\alpha\beta}E_{q}(\beta,h)\big).
\end{equation}
\end{prop}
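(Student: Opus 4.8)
The plan is to fix $\beta>0$, regard $Z_{q}(\beta,h)$ as a function of $h$ alone, and differentiate it directly from the series (\ref{1.1}), using the homogeneity of $V$ to control the $h$-dependence of each eigenvalue.

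First I would invoke the scaling already established in the paragraph around (\ref{4.18})--(\ref{4.22}): for $V(r)=r^{\nu}$ the eigenvalues of the operator (\ref{2.6}) are $E_{n}(h)=\phi(h)E_{n}$ with $\phi(h)=h^{2\nu/(2+\nu)}$ and $E_{n}>0$ independent of $h$, so by (\ref{1.1}) one has the closed form $Z_{q}(\beta,h)=\Phi\big(\beta\phi(h)\big)$ with $\Phi$ as in (\ref{4.7}). This also records that $Z_{q}(\cdot,h)$ is smooth on $(0,\infty)$, since $\Phi$ and all of its termwise derivative series converge locally uniformly on $(0,\infty)$ by Corollary \ref{Corollary 2.1.}.

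Next I would differentiate. Because $\phi$ is a power function, $E_{n}'(h)=\phi'(h)E_{n}=\big(\phi'(h)/\phi(h)\big)E_{n}(h)$, so termwise differentiation of (\ref{1.1}) gives
\[
\frac{\partial}{\partial h}Z_{q}(\beta,h)=-\beta\sum_{n=1}^{\infty}E_{n}'(h)e^{-\beta E_{n}(h)}=-\beta\,\frac{\phi'(h)}{\phi(h)}\sum_{n=1}^{\infty}E_{n}(h)e^{-\beta E_{n}(h)},
\]
and by the definition (\ref{1.5}) the last sum equals $E_{q}(\beta,h)Z_{q}(\beta,h)$. Since $\phi(h)=h^{2\nu/(2+\nu)}$, the logarithmic derivative $h\,\phi'(h)/\phi(h)$ is the constant $\alpha$ appearing in the statement (the exponent of $h$ in $\phi$, cf. (\ref{4.22})), so $h\,\partial_{h}Z_{q}(\beta,h)=-\alpha\beta\,E_{q}(\beta,h)Z_{q}(\beta,h)$. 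The asserted identity is then just the product rule:
\[
\frac{d}{dh}\big(h^{N}Z_{q}(\beta,h)\big)=Nh^{N-1}Z_{q}(\beta,h)+h^{N-1}\big(h\,\partial_{h}Z_{q}(\beta,h)\big)=h^{N-1}Z_{q}(\beta,h)\big(N-\alpha\beta E_{q}(\beta,h)\big).
\]

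I do not expect a genuine obstacle here: the content is exactly the bookkeeping above, in the spirit of the paper's own remark that ``relations (\ref{1.1}) and (\ref{1.5}) imply the assertion.'' The two points that want a little care are (i) keeping the homogeneity exponent straight so that $h\,\phi'(h)/\phi(h)$ comes out as precisely the constant in the statement, and (ii) justifying the termwise $h$-differentiation of $\sum_{n}e^{-\beta E_{n}(h)}$; for (ii) it suffices that $\sum_{n}E_{n}(h)e^{-\beta E_{n}(h)}$ converge locally uniformly in $h>0$, which follows from the convergence of $Z_{q}$ (Corollary \ref{Corollary 2.1.}) together with the standing finiteness $E_{q}(\beta,h)<\infty$, and it can anyway be bypassed by differentiating the closed form $Z_{q}=\Phi(\beta\phi(h))$ and using smoothness of $\Phi$.
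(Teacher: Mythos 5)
Your computation is correct and is exactly what the paper intends (it offers no proof beyond the remark that relations (\ref{1.1}) and (\ref{1.5}) imply the assertion): write $Z_{q}(\beta,h)=\Phi(\beta\phi(h))$, differentiate termwise, identify the resulting sum via (\ref{1.5}), and apply the product rule. The one point worth flagging more loudly than your parenthetical ``cf.\ (\ref{4.22})'': you read $\alpha$ as the exponent $2\nu/(2+\nu)$ of $h$ in $\phi$, i.e.\ as $\alpha\nu=2-2\alpha$ in terms of the $\alpha=2/(2+\nu)$ the paper defines in the substitution $x=h^{\alpha}y$; this is the reading under which the stated identity is true and under which Proposition \ref{propd2} ($E_{c}=N/(\alpha\beta)$, which for $V=r^{\nu}$ actually equals $N(\nu+2)/(2\nu\beta)$) is also correct, but it disagrees with the paper's literal definition of $\alpha$, so the constant should be stated explicitly rather than inherited silently.
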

By some direct calculations we get another proposition.

\begin{prop} \label{propd2} 
Let a potential V(r) have the form (\ref{4.19}).
Then we have
\begin{equation}E_{c}(\beta)=\frac{N}{\alpha\beta}.\end{equation}
\end{prop}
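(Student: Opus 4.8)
The plan is to compute the classical statistical sum $Z_c(\beta)$ in closed form, up to a $\beta$-independent constant, and then extract $E_c(\beta)$ from the logarithmic-derivative formula in (\ref{3.3}). For the homogeneous potential $V(r)=r^{\nu}$ the coercivity hypothesis (\ref{2.2}) holds, so one may take $\Omega=R^{N}$, and (\ref{2.9}) gives $Z_c(\beta)=(2\pi m/\beta)^{N/2}\int_{R^{N}}e^{-\beta r^{\nu}}\,dx$; only the $\beta$-dependence of the spatial integral is relevant.

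The key step is a scaling in the space variable: the substitution $x=\beta^{-1/\nu}y$ turns $\beta r^{\nu}$ into $|y|^{\nu}$ and contributes a Jacobian $\beta^{-N/\nu}$, so $\int_{R^{N}}e^{-\beta r^{\nu}}\,dx=c_{N,\nu}\,\beta^{-N/\nu}$ with $c_{N,\nu}=\int_{R^{N}}e^{-|y|^{\nu}}\,dy$ a finite positive constant (in polar coordinates, with $s=|y|^{\nu}$, one finds $c_{N,\nu}=\omega_{N-1}\Gamma(N/\nu)/\nu$, but only its finiteness, assured by $\nu>0$, is needed). Thus $Z_c(\beta)=C\,\beta^{-(N/2+N/\nu)}$ with $C=(2\pi m)^{N/2}c_{N,\nu}$ independent of $\beta$, and by the second identity of (\ref{3.3}), $E_c(\beta)=-\partial_{\beta}\log Z_c(\beta)=(N/2+N/\nu)/\beta$. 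It remains to simplify the coefficient: $\tfrac{N}{2}+\tfrac{N}{\nu}=N\tfrac{\nu+2}{2\nu}=\tfrac{N}{\alpha}$ with $\alpha=2\nu/(2+\nu)$, the constant appearing in Proposition \ref{propd1}, which gives the stated identity $E_c(\beta)=N/(\alpha\beta)$.

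No step is a real obstacle; the only things to verify are that $\Omega=R^{N}$ is admissible in (\ref{2.9}) for the coercive potential $r^{\nu}$ and that the rescaled integral converges, both immediate for $\nu>0$. As a cross-check, one can rescale directly in the phase-space integral (\ref{1.2}): setting $p=\beta^{-1/2}p'$ and $q=\beta^{-1/\nu}q'$ gives $H(p,q)=\beta^{-1}H(p',q')$ and $dp\,dq=\beta^{-(N/2+N/\nu)}dp'\,dq'$, so $Z_c(\beta)=\beta^{-(N/2+N/\nu)}Z_c(1)$ and (\ref{3.3}) yields the same conclusion.
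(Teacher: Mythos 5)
Your derivation is correct and is exactly the ``direct calculation'' the paper has in mind: rescale $x\mapsto\beta^{-1/\nu}x$ in (\ref{2.9}) to get $Z_{c}(\beta)=C\,\beta^{-N/2-N/\nu}$ with $C$ independent of $\beta$, then apply (\ref{3.3}). One notational caveat: the identity $E_{c}(\beta)=N/(\alpha\beta)$ holds with $\alpha=2\nu/(2+\nu)$, i.e.\ the exponent appearing in (\ref{4.22}) (which is also the only value making Proposition \ref{propd1} correct), whereas the text preceding (\ref{4.22}) defines $\alpha=2/(2+\nu)$; your reading is the internally consistent one.
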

From Propositions \ref{propd1} and \ref{propd2} we obtain the statement below.

\begin{prop} \label{propd3}
Let a potential V(r) have  the form (\ref{4.19}). Then the  relations
\begin{equation}E_{q}(\beta,h)>E_{c}(\beta)
\quad {\mathrm{and}} \quad \frac{d}{dh}\big(h^{N}Z_{q}(\beta,h)\big)<0
\end{equation}
are equivalent.
\end{prop}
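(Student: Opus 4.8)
The plan is to read the equivalence off directly from Propositions \ref{propd1} and \ref{propd2}, the only point requiring comment being the strict positivity of the prefactor that occurs in Proposition \ref{propd1}. First I would note that for a potential of the form (\ref{4.19}) on $\Omega=R^{N}$ one has $\int_{R^{N}}e^{-\beta V(x)}\,dx<\infty$, so that Corollary \ref{Corollary 2.1.} applies and the series $Z_{q}(\beta,h)=\sum_{n}e^{-\beta E_{n}(h)}$ converges; being a sum of positive terms, $Z_{q}(\beta,h)$ is a strictly positive real number, whence $h^{N-1}Z_{q}(\beta,h)>0$ for every $h>0$ (and $E_{q}(\beta,h)$ is finite as well).

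Next I would invoke Proposition \ref{propd1}, which asserts
\[
\frac{d}{dh}\big(h^{N}Z_{q}(\beta,h)\big)=h^{N-1}Z_{q}(\beta,h)\big(N-\alpha\beta E_{q}(\beta,h)\big).
\]
Since the factor $h^{N-1}Z_{q}(\beta,h)$ is strictly positive, the left-hand side is negative exactly when $N-\alpha\beta E_{q}(\beta,h)<0$; and because $\alpha>0$ and $\beta>0$, this in turn is equivalent to $E_{q}(\beta,h)>N/(\alpha\beta)$.

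Finally, by Proposition \ref{propd2} we have $E_{c}(\beta)=N/(\alpha\beta)$, so the last inequality is exactly $E_{q}(\beta,h)>E_{c}(\beta)$. Chaining these equivalences gives $\frac{d}{dh}\big(h^{N}Z_{q}(\beta,h)\big)<0\iff E_{q}(\beta,h)>E_{c}(\beta)$, which is the claim. I do not anticipate any real difficulty: essentially all of the content has been absorbed into Propositions \ref{propd1} and \ref{propd2}, and the single step that genuinely needs justification is the positivity of $h^{N-1}Z_{q}(\beta,h)$, which is immediate from $h>0$ together with the convergence and positivity of the partition sum.
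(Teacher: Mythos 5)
Your proposal is correct and is essentially the paper's own argument: the paper gives no separate proof, simply stating that Proposition \ref{propd3} follows from Propositions \ref{propd1} and \ref{propd2}, and your chain of equivalences (sign of the derivative determined by $N-\alpha\beta E_{q}$, then $E_{c}=N/(\alpha\beta)$) is exactly the intended reasoning. Your added remark on the positivity and finiteness of $Z_{q}(\beta,h)$ via Corollary \ref{Corollary 2.1.} is a harmless (and slightly more careful) supplement.
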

In addition to Conjecture \ref{Conjecture 1.1.} we  formulate the following conjecture.

\begin{conj}
\label{Conjecture 4.1.}
The function $h^{N}Z_{q}(\beta,h)$ is monotonically decreasing
with respect to $h$. 
\end{conj}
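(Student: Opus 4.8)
The plan is to reduce the monotonicity of $h^{N}Z_{q}(\beta,h)$ in $h$ to a statement about the quantum mean energy, and then to attack that statement spectrally. By Proposition \ref{propd1} we have
\begin{equation*}
\frac{d}{dh}\big(h^{N}Z_{q}(\beta,h)\big)=h^{N-1}Z_{q}(\beta,h)\big(N-\alpha\beta E_{q}(\beta,h)\big),
\end{equation*}
so, since $h^{N-1}Z_{q}(\beta,h)>0$, the claimed monotonicity is equivalent to $E_{q}(\beta,h)\ge N/(\alpha\beta)$, which by Proposition \ref{propd2} is exactly $E_{q}(\beta,h)\ge E_{c}(\beta)$; and by Proposition \ref{propd3} the strict versions of these two statements are equivalent. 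Hence the whole conjecture comes down to proving Conjecture \ref{Conjecture 1.2.} in the homogeneous-potential case $V(r)=r^{\nu}$, i.e. $E_{q}(\beta,h)>E_{c}(\beta)$ for all $h>0$, $\beta>0$.

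The first concrete step I would take is to pass to the rescaled variable $\lambda=\phi(h)\beta$ with $\phi(h)=h^{2\nu/(2+\nu)}$ from \eqref{4.22}, after which, by \eqref{4.5} and \eqref{3.3}, the quantum mean energy becomes $E_{q}=\phi(h)\,(-\Phi'(\lambda)/\Phi(\lambda))$ with $\Phi(\lambda)=\sum_n e^{-\lambda E_n}$, and the inequality to be proved takes the homogeneous form $-\lambda\Phi'(\lambda)/\Phi(\lambda)\ge C_N:=N/\alpha = N(2+\nu)/(2\nu)$ for all $\lambda>0$. Next I would establish that $-\lambda\Phi'(\lambda)/\Phi(\lambda)$ is monotone in $\lambda$: differentiating and using the Cauchy--Schwarz computation already carried out in \eqref{4.9} for $\Psi'(\lambda)$, one sees that $-\lambda\Phi'(\lambda)/\Phi(\lambda)$ is \emph{decreasing} in $\lambda$ (its derivative is $\Psi'(\lambda)<0$). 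Therefore it suffices to control the limit $\lambda\to+\infty$, i.e. the low-temperature regime, and show $\lim_{\lambda\to\infty}\big(-\lambda\Phi'(\lambda)/\Phi(\lambda)\big)\ge C_N$. This is where the Weyl asymptotics for the eigenvalues $E_n$ of $-\tfrac12\Delta+r^\nu$ on $\mathbb{R}^N$ enter: $E_n\sim c\,n^{2\nu/(N(\nu+2))}$, so $\Phi(\lambda)\sim c'\lambda^{-N(\nu+2)/(2\nu)}$ as $\lambda\to+\infty$ by a Tauberian/Laplace argument, which gives $-\lambda\Phi'(\lambda)/\Phi(\lambda)\to N(\nu+2)/(2\nu)=C_N$ — precisely matching the classical value. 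Since the function is decreasing and tends to exactly $C_N$ at infinity, it exceeds $C_N$ for every finite $\lambda$, and strictly so because $\Psi'<0$.

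The main obstacle I anticipate is the rigorous Tauberian step: extracting the exact constant in $\Phi(\lambda)\sim c'\lambda^{-C_N}$ as $\lambda\to\infty$ and, more importantly, justifying the termwise interchange $-\lambda\Phi'(\lambda)/\Phi(\lambda)\to C_N$ (a ratio of two asymptotics, which requires either a Karamata-type theorem or uniform control of the remainder after differentiating the asymptotic expansion of $\Phi$). One clean way around differentiating is to use that $-\lambda\Phi'(\lambda)/\Phi(\lambda)=\lambda\,\frac{d}{d\lambda}\big(-\log\Phi(\lambda)\big)$ together with the fact that $-\log\Phi(\lambda)\sim C_N\log\lambda$ and that $-\log\Phi$ is convex in $\log\lambda$ (equivalently $\Psi\le 0$-type monotonicity), so the difference quotient of a convex function is squeezed to its asymptotic slope $C_N$. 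A secondary, more technical point is verifying that the relevant operator $-\tfrac{h^2}{2m}\Delta + r^\nu$ on $\mathbb{R}^N$ indeed satisfies the hypotheses of Theorem \ref{Theorem 2.1.} (here $\Omega=\mathbb{R}^N$, using Remark \ref{Remark 2.1.} and \eqref{2.2}), so that the discrete-spectrum machinery and the asymptotic relation \eqref{2.5} are available; the Lipschitz condition \eqref{2.1} holds on bounded sets for $\nu\ge 1$ and for $0<\nu<1$ one restricts \eqref{2.1} to $\alpha=\nu$, which is still permitted.
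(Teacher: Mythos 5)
Your reduction of the conjecture, via Propositions \ref{propd1}--\ref{propd3}, to the inequality $E_{q}(\beta,h)>E_{c}(\beta)$ for the homogeneous potential is exactly the reduction the paper itself makes --- and the paper stops there: the statement is left as a conjecture, only the averaged inequality \eqref{3.2} of Theorem \ref{Theorem 3.1.} being proved. Your attempt to close the remaining gap spectrally contains two genuine errors. First, the monotonicity claim is false: the quantity you need to differentiate is $G(\lambda):=-\lambda\Phi'(\lambda)/\Phi(\lambda)$, whereas \eqref{4.9} computes the derivative of $\Psi(\lambda)=G(\lambda)+\log\Phi(\lambda)$; the omitted term is not negligible, since
\begin{equation*}
G'(\lambda)=-\frac{\Phi'(\lambda)}{\Phi(\lambda)}-\lambda\,\frac{\Phi''(\lambda)\Phi(\lambda)-\Phi'(\lambda)^{2}}{\Phi(\lambda)^{2}}
\end{equation*}
is the difference of a positive and a nonnegative term. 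In fact $G(\lambda)=\lambda\sum_{n}E_{n}e^{-\lambda E_{n}}/\Phi(\lambda)\sim\lambda E_{1}\to+\infty$ as $\lambda\to\infty$ (the ground level $E_{1}>0$ dominates), so $G$ cannot be decreasing towards $C_{N}$.

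Second, the asymptotics are applied at the wrong end of the temperature axis: Weyl's law plus a Karamata argument gives $\Phi(\lambda)\sim c\,\lambda^{-N(\nu+2)/(2\nu)}$ as $\lambda\to+0$, not as $\lambda\to\infty$ (as $\lambda\to\infty$ one has $\Phi(\lambda)\sim m_{1}e^{-\lambda E_{1}}$). Consequently $G(\lambda)\to C_{N}$ holds as $\lambda\to+0$, which is nothing more than the known high-temperature asymptotics \eqref{1.9}--\eqref{1.10} (cf.\ \eqref{2.5}, \eqref{2.11}), and a limit at $\lambda\to+0$ by itself yields no pointwise bound $G(\lambda)\geq C_{N}$ for finite $\lambda$: for that you would need $G$ to be monotonically increasing in $\lambda$ (or some substitute), and establishing such a statement is precisely the open content of Conjectures \ref{Conjecture 1.1.}, \ref{Conjecture 1.2.} and \ref{Conjecture 4.1.}. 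So the proposal does not prove the conjecture; what survives of it is the equivalence already recorded in Proposition \ref{propd3}.
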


If (\ref{4.19}) holds, Proposition \ref{propd3}  shows that  Conjectures 
 \ref{Conjecture 1.1.} and  \ref{Conjecture 4.1.} are equivalent.

\begin{rem} \label{Rem4.2.} In particular, our considerations could be used
to treat an old problem by A. Wehrl. In his paper \cite{Weh2} he wrote:
"It is usually claimed that in the limit $h{\to}0$ the quantum-mechanical expression
tends toward the classical one, however, a rigorous proof of this is nowhere to be found in the literature".
Assuming that
\begin{eqnarray} \label{W1} &&
E_{q}(\beta,h){\to}E_{c}(\beta), \quad h{\to}0, 
\\
\label{W2}&&
(2{\pi}h)^{N}Z_{q}(\beta,h){\to}Z_{c}(\beta), \quad h{\to}0,
\end{eqnarray}
and using (4.3) and (4.4) we get
\begin{equation}\label{W3}
S_{q}(\beta,h)=S_{c}(\beta,h)+o(1),\quad h{\to}0.
\end{equation}
The present paper contains conditions for relations (\ref{W1})
and (\ref{W2}) to hold.
\end{rem}

 \section {Connection between energy and entropy, game theory point of view}\label{sec5}
Let the eigenvalues $E_{n}$ of the energy operator $L$ be given. Consider the
mean energy
$E=\sum_{n}{E_{n}P_{n}}$ and  the entropy $S=-\sum_{n}P_{n}\log{P_{n}}$.
Here $P_{n}$ are the corresponding probabilities, that is,  $\sum_{n}P_{n}=1$.
Hence $P_{n}$ can be represented in the following form
$P_{n}=p_{n}/Z$, where $Z=\sum_{n}p_{n}$.
Our aim is to find the probabilities $P_{n}$.
For that purpose we consider the function
\begin{equation} \label{5.1}
  F=\lambda{E}+S,
\end{equation}
where $\lambda=-\beta=-1/kT$ (see (\ref{1.1})).\\

\noindent {\bf Fundamental Principle.} {\it The function $F$ defines the game between
the mean energy $E_q$ and the entropy $S_q$.}\\

To find the stationary point of $F$ we calculate
\begin{equation} \label{5.2}
  \frac{\partial{F}}{\partial{p_{k}}}=\lambda\Big(E_{k}/Z-\sum_{n=1}^{\infty}E_{n}p_{n}/Z^{2}\Big)-
(\log{p_{k})/Z}+\sum_{n=1}^{\infty}p_{n}\log{p_{n}}/Z^{2}.
\end{equation}
It follows from (\ref{5.2}) that the point
\begin{equation} \label{5.3}
  p_{n}=e^{{\lambda}E_{n}}, \qquad n=1,2,\ldots
\end{equation}
is a stationary point.  Moreover, the stationary point is unique up to a scalar
multiple. Without loss of generality this multiple can be fixed as in (\ref{5.3}).

\begin{cor}\label{Corollary 5.0.}
The basic formulas (\ref{1.5}), (\ref{4.1}), and (\ref{4.2}) are
immediate from (\ref{5.3}).
 \end{cor}

By direct calculation we get in the stationary point (\ref{5.3}) the equalities
\begin{equation} \label{5.4}
  \frac{\partial^{2}F}{\partial{p_{k}^{2}}}=-Z_k/(p_kZ^2)<0,
\quad Z_k:= \sum_{j\not= k}p_j;
\quad \frac{\partial^{2}F}{\partial{p_{k}}\partial{p_{j}}}=1/Z^{2}>0,\quad j{\ne}k .
\end{equation}
Relations (\ref{5.4}) imply the following assertion.

\begin{cor}\label{Corollary 5.1.}
 The stationary point  (\ref{5.3}) is a maximum of the function $F$.
 \end{cor}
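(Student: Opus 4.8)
The plan is to verify that the Hessian of $F$ at the stationary point (\ref{5.3}), whose entries are recorded in (\ref{5.4}), is negative semidefinite, the only degenerate direction being the harmless scaling direction; this is exactly what is needed for (\ref{5.3}) to be a maximum. First I would note that $F$ is invariant under $p_n\mapsto c\,p_n$ for every $c>0$, since $P_n=p_n/Z$ is unchanged; hence along the radial direction $\xi=(p_1,p_2,\dots)$ the function $F$ is literally constant and its second differential there must vanish, so one cannot hope for strict negative definiteness. The sharp statement to aim for is therefore that the quadratic form of second derivatives is negative semidefinite with kernel exactly this one line, equivalently strictly negative on the tangent space $\sum_k\xi_k=0$ of the normalization $\sum_nP_n=1$.

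Next I would assemble the Hessian from (\ref{5.4}). Writing $Z_k=Z-p_k$, the diagonal entry becomes $\partial^{2}F/\partial p_k^{2}=1/Z^{2}-1/(p_kZ)$, so for a (finitely supported) vector $\xi=(\xi_k)$ the second differential at (\ref{5.3}) collapses to
\[
Q(\xi)=\sum_{k,j}\frac{\partial^{2}F}{\partial p_k\,\partial p_j}\,\xi_k\xi_j
=\frac{1}{Z^{2}}\Big(\sum_k\xi_k\Big)^{2}-\frac{1}{Z}\sum_k\frac{\xi_k^{2}}{p_k}.
\]
The key step is then a single application of the Cauchy--Schwarz inequality: $\big(\sum_k\xi_k\big)^{2}=\big(\sum_k\sqrt{p_k}\cdot(\xi_k/\sqrt{p_k})\big)^{2}\le\big(\sum_kp_k\big)\big(\sum_k\xi_k^{2}/p_k\big)=Z\sum_k\xi_k^{2}/p_k$. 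Substituting gives $Q(\xi)\le0$, with equality precisely when $\xi_k/\sqrt{p_k}$ is proportional to $\sqrt{p_k}$, i.e. $\xi\parallel p$ — the scaling direction identified above. Thus $Q$ is negative semidefinite with kernel exactly that line, and since $F$ is constant along it, the stationary point (\ref{5.3}) is indeed a maximum of $F$, strict modulo the scaling degeneracy (e.g. after imposing $\sum_nP_n=1$).

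I expect the main obstacle to be technical rather than conceptual: making the manipulation legitimate in the infinite-dimensional setting. One must ensure that the series defining $E$, $S$ and their second derivatives converge at the point (\ref{5.3}) — this is the content of the finiteness hypotheses on the statistical sum and mean energy, cf. Corollary \ref{Corollary 2.1.} — and it is cleanest to establish negative semidefiniteness first for finitely supported $\xi$ and then pass to the limit. An alternative that sidesteps the Hessian entirely is to observe that $S=-\sum_nP_n\log P_n$ is strictly concave in $(P_n)$ on the probability simplex while $E=\sum_nE_nP_n$ is linear, so $F$ is strictly concave there and the normalized version of (\ref{5.3}) is its unique global maximizer. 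Either route yields Corollary \ref{Corollary 5.1.}; I would present the Cauchy--Schwarz computation, since it uses exactly the formulas (\ref{5.4}) already in hand.
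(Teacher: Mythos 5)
Your argument is correct, and it reaches the conclusion by a genuinely different route than the paper. The paper does not work with the quadratic form directly: it evaluates the leading principal minors $H_k(F)$ of the Hessian in closed form, using the P\'olya--Szeg\H{o} determinant identity \eqref{5.5}--\eqref{5.7} with $r_n=-Z_n/p_n$, obtains $H_k(F)=(-Z)^{-k}\big(1-\sum_{n\le k}p_n/Z\big)/\prod_{n\le k}p_n$, and concludes from the sign alternation $\mathrm{sgn}\,H_k(F)=(-1)^k$ via the classical minor test. You instead contract the Hessian entries \eqref{5.4} into $Q(\xi)=Z^{-2}\big(\sum_k\xi_k\big)^2-Z^{-1}\sum_k\xi_k^2/p_k$ and apply Cauchy--Schwarz, which buys two things the paper's computation does not make visible: an explicit identification of the degenerate direction $\xi\parallel p$ (harmless, since $F$ is invariant under $p\mapsto cp$, and invisible to the paper's finite minors because that kernel vector has infinite support --- note each finite truncation is indeed strictly negative definite, so there is no contradiction), and strict negativity of $Q$ on the constraint tangent space $\sum_k\xi_k=0$, which is the statement actually relevant to maximizing over probabilities. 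Your fallback observation that $E$ is linear and $S$ strictly concave in $(P_n)$, so that the normalized point \eqref{5.3} is the unique \emph{global} maximizer on the simplex, is in fact stronger than what either Hessian computation yields and is the cleanest way to dispose of the point (correctly flagged by you) that negative semidefiniteness alone is not a sufficient second-order condition; both your argument and the paper's share the same unaddressed infinite-dimensional convergence caveats, which you at least mention explicitly.
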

\begin{proof} We use the following result (see \cite[Ch.7, Problem 7]{PS}):
\begin{equation}\label{5.5} \det\left[\begin{array}{ccccc}
                      r_{1} & a & a & ... & a \\
                      b & r_{2} & a & ... & a \\
                      b & b & r_{3} & ... & a \\
                      ... & ... & ... & ... & ... \\
                      b & b & b & ... & r_{k}
                    \end{array}\right]=\frac{af(b)-bf(a)}{a-b},\end{equation}
 where
 \begin{equation}\label{5.6} f(x)=( r_{1}-x)( r_{2}-x)...( r_{k}-x).\end{equation}
In the case that $a=b$, the equality below is easily derived from \eqref{5.5}:
\begin{equation}\label{5.7} \det\left[\begin{array}{ccccc}
                      r_{1} & a & a & ... & a \\
                      a & r_{2} & a & ... & a \\
                      a & a & r_{3} & ... & a \\
                      ... & ... & ... & ... & ... \\
                      a & b & a & ... & r_{k}
                    \end{array}\right]=-af^{\prime}(a)+f(a).\end{equation}
Using (\eqref{5.4} and \eqref{5.7} we calculate the Hessian $H_{k}(F)$ in the stationary point:
\begin{equation}\label{5.8}  H_{k}(F)=Z^{-2k}[-f^{\prime}(1)+f(1)],
\end{equation}
where $f$ is given by \eqref{5.6} and $r_{n}=-Z_{n}/p_{n}$. Rewrite \eqref{5.8} in the form
\[
H_{k}(F)=(-Z)^{-k}\Big(1-\big(\sum_{n=1}^k p_n\big)/Z\Big)/\prod_{n=1}^k p_n
\]
to see that the relation
${\mathrm{sgn}\,}\big(H_{k}(F)\big)=(-1)^{k}$
is true. Hence, the corollary is proved.
\end{proof}
Note that  the basic relations \eqref{5.3} 
are obtained by solving a new extremal problem. Namely, in the introduced function F
the parameter $\lambda$ is fixed  instead of the  energy $E$, which is usually fixed.

 \begin{rem}\label{Remark 5.1.}
In the game theory the transition from deterministic to probabilistic strategy leads to a gain for  players. 
The transition from classical to quantum mechanics leads to a gain for both players (energy and entropy) too (see (\ref{1.8}) and
Theorem \ref{Theorem 4.1.}).
\end{rem}

\section{Conclusion}
For small values of $h$ the relation between   quantum and
classical statistical sums was deduced by E. Wigner and J.G. Kirkwood.
However, the comparison
of the quantum and classical approaches for energy, statistical sum and entropy
without the demand of h being small is of essential scientific and methodological interest. In our paper we obtain some general results 
and discuss some conjectures connected with the formulated problem.

In particular, general and rigorous results on relations between ordinary quantum and classical statistical sums 
(see Theorem \ref{Theorem 2.2.}) could be derived
from an important work by D. Ray \cite{6}
on the spectra of Schr\"odinger operators. Furthermore, our approach allows 
to treat an old entropy problem by A. Wehrl (see Remark \ref{Rem4.2.}).
We introduce also the function, the extremum point of which gives the well-known Gibbs formulas.   
The results of the paper intersect with
some ideas of game theory
(see, e.g.,   \cite{i1}):   the transition from classical
 (determined strategy) to quantum mechanics (probabilistic strategy) leads to a gain for both players. 
However, we stress that the connection between energy and entropy is a new type of a game, where
the players do not have a freedom to choose their strategy.

Our note could be considered as an input into the important discussion
on  the deterministic and probabilistic aspects of quantum theory
(see \cite{AB, Groes, Holland1}, and references therein).
As the next step it would be fruitful to consider also the possibility to generalize
our conjectures and results for the case of  nonextensive statistical mechanics
\cite{Ts}.












\end{document}